\documentclass[draftpaper]{article}

\usepackage{pstricks,calc,pst-node,graphics}
\usepackage{amsgen,amsmath,amssymb,color}
\usepackage[pdftex]{graphicx}
\usepackage{rotating}
\usepackage{lscape}
\usepackage{longtable}
\usepackage{enumitem}
\usepackage{graphicx}
\usepackage{type1cm}
\usepackage{eso-pic}
\usepackage{pst-circ}

\renewcommand{\d}{\mathrm{d}}

\newcommand{\ugualino}{\mbox{\tiny $=$}}

    \newcommand{\newpar}{{\vspace{0.3cm} \noindent}}

\newcommand{\indep}{\mbox{$\perp\!\!\!\perp$}}

\psset{arrows=->,fillcolor=white,fillstyle=solid}
\newcommand{\Show}[1]{\psovalbox{#1}}

 \definecolor{pink}{rgb}{1,
  .75, .8} \definecolor{emcolor}{rgb}{1,0,0}

\newtheorem{theorem}{Theorem}[section]

\newtheorem{definition}{Definition}[section]

\newtheorem{Definition}[definition]{Definition}

\newenvironment{proof}[1][Proof]{\begin{trivlist}
\item[\hskip \labelsep {\bfseries #1}]}{\end{trivlist}}

\makeatletter
\AddToShipoutPicture*{%
\setlength{\@tempdimb}{.5\paperwidth}%
\setlength{\@tempdimc}{.5\paperheight}%
\setlength{\unitlength}{1pt}%
\put(\strip@pt\@tempdimb,\strip@pt\@tempdimc){%
\makebox(0,0){\rotatebox{45}{\textcolor[gray]{0.85}%
{\fontsize{2cm}{1.8cm}\selectfont{}}}}%
}%
}
\makeatother

\begin{document}
\DeclareGraphicsExtensions{.pdf,.png,.gif,.jpg,.tif,.ps}
\title{\textcolor[rgb]{0.00,0.00,0.63}{\huge Deep determinism and the
    assessment of mechanistic interaction between categorical and
    continuous variables\\
\rule{0cm}{0.01cm}}}
\author{CARLO BERZUINI \hspace{0.2cm} and \hspace{0.2cm} A. PHILIP DAWID\\
\rule{0cm}{0.2cm}\\
Statistical Laboratory,\\
Centre for Mathematical Sciences,\\
  University of Cambridge, UK\\
    \rule{0cm}{0.2cm}\\
  \hspace{0.01cm} [c.berzuini,apd]@statslab.cam.ac.uk}
\date{}
\maketitle

\begin{center} {\textcolor[rgb]{0.00,0.00,0.63}{\bf Summary}}
\end{center}

\noindent {\small Our aim is to detect mechanistic interaction between
  the effects of two causal factors on a binary response, as an aid to
  identifying situations where the effects are mediated by a common
  mechanism.  We propose a formalization of mechanistic interaction
  which acknowledges asymmetries of the kind ``factor $A$ interferes
  with factor $B$, but not viceversa''. A class of tests for
  mechanistic interaction is proposed, which works on discrete or
  continuous causal variables, in any combination. Conditions under
  which these tests can be applied under a generic regime of data
  collection, be it interventional or observational, are discussed in
  terms of conditional independence assumptions within the framework
  of Augmented Directed Graphs.  The scientific relevance of the
  method and the practicality of the graphical framework are
  illustrated with the aid of two studies in coronary artery disease.
  Our analysis relies on the ``deep determinism'' assumption that
  there exists some relevant set $V$ --- possibly unobserved --- of
  ``context variables'', such that the response $Y$ is a deterministic
  function of the values of $V$ and of the causal factors of
  interest. Caveats regarding this assumption in real studies are
  discussed.}

\section{\textcolor[rgb]{0.00,0.00,0.63}{\bf Introduction}}
\label{Introduction}

Let the binary random variable $Y$ indicate occurrence ($Y\ugualino1$)
or non-occurrence ($Y\ugualino0$) of an outcome event of interest, and
let $Y$ depend causally (in a sense to be later clarified) on factors
$A$ and $B$.  Also consider a real but possibly unobservable variable
or set of variables $V$, which collude with $A$ and $B$ to cause the
response $Y$, as illustrated by the directed graph of
Figure~\ref{Figure 1}{\em (a)}. In general, even were we to know $A$,
$B$ and $V$, the response $Y$ would not be fully determined, but would
retain an element of random variation.  In certain applications,
however, it might be reasonable to assume that there exists some
relevant set of variables $V$, which we will term {\em context
  variables\/}, such that the binary response $Y$ is {fully
  determined}, without further variation, by $V$ and the values we
impose on $A$ and $B$. More precisely, consider the collection of
(real or hypothetical) interventional regimes where we force $A$ and
$B$ to take on some configuration $(a,b)$. Then the assumption is
that, under such regimes, we have:

\begin{equation}
  \label{eq:f}
  Y = f(A,B,V)
\end{equation}
\noindent for some (typically unknown) function $f$.  Thus, for any
value of $V$, the $(a,b)$ configuration which we force upon $(A,B)$
will precisely dictate whether or not the event $Y=1$ will occur.  We
call this assumption {\em deep determinism\/}.

\begin{figure}
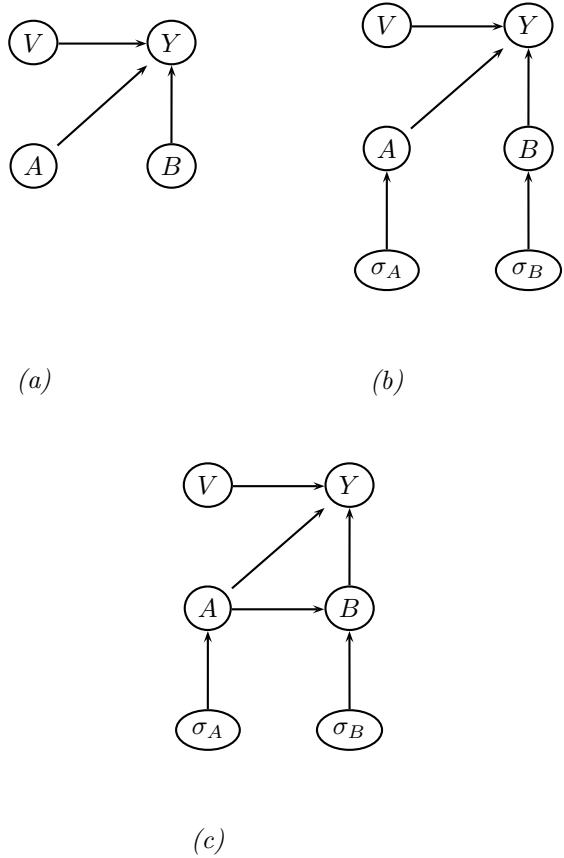

  \begin{center}
    \psframebox[fillstyle=solid,fillcolor=white,
    linestyle=none,framesep=.4,arrowsize=4,arrowlength=4]{%
      \begin{psmatrix}[ref=t,rowsep=1cm,colsep=1cm]
        \Show{$V$}&\Show{$Y$}\\
        \Show{$A$}&\Show{$B$}\\
        \rule{1cm}{0cm}\\
        {\em (a)} \ncLine{->}{1,1}{1,2}\Aput{$$}
        \ncLine{->}{2,1}{1,2}\Aput{$$} \ncline{->}{2,2}{1,2}\Aput{$$}
      \end{psmatrix}} \hspace{1cm}
    \psframebox[fillstyle=solid,fillcolor=white,
    linestyle=none,framesep=.4,arrowsize=4,arrowlength=4]{%
      \begin{psmatrix}[ref=t,rowsep=1cm,colsep=1cm]
        \Show{$V$}&\Show{$Y$}\\
        \Show{$A$}&\Show{$B$}\\
        \Show{$\sigma_A$}&\Show{$\sigma_B$}\\
        {\em (b)} \ncLine{->}{1,1}{1,2}\Aput{$$}
        \ncLine{->}{2,1}{1,2}\Aput{$$} \ncline{->}{2,2}{1,2}\Aput{$$}
        \ncline{->}{3,1}{2,1}\Aput{$$} \ncline{->}{3,2}{2,2}\Aput{$$}
      \end{psmatrix}} \hspace{1cm}
    \psframebox[fillstyle=solid,fillcolor=white,
    linestyle=none,framesep=.4,arrowsize=4,arrowlength=4]{%
      \begin{psmatrix}[ref=t,rowsep=1cm,colsep=1cm]
        \Show{$V$}&\Show{$Y$}\\
        \Show{$A$}&\Show{$B$}\\
        \Show{$\sigma_A$}&\Show{$\sigma_B$}\\
        {\em (c)} \ncLine{->}{1,1}{1,2}\Aput{$$}
        \ncLine{->}{2,1}{1,2}\Aput{$$} \ncline{->}{2,2}{1,2}\Aput{$$}
        \ncline{->}{3,1}{2,1}\Aput{$$} \ncline{->}{3,2}{2,2}\Aput{$$}
        \ncline{->}{2,1}{2,2}\Aput{$$}
      \end{psmatrix}}
    \caption{\small {\em (a)} our initial problem setting,
    {\em (b)} assumptions about the relationships between
    different regimes of data collection are added by the
    inclusion of {\em intervention indicators} in the graph,
    as discussed in Section~\ref{ADAGs},
    {\em (c)} the effects of $A$ and $B$ on $Y$
    are jointly, but not individually,
    unconfounded.
    \label{Figure 1}}
\end{center}
\end{figure}

\newpar If we can perform an experiment, setting $A$ and $B$ to specific
values and observing the corresponding $Y$ outcomes (but not observing
$V$), the resulting data may help us predict the effect upon $Y$ of
intervening on $A$ and/or $B$.  But we can probe more deeply.  We can
investigate {\em context-specific} causal effects --- the effects of
$A$ and $B$ upon $Y$ in a context determined by some given value $v$
for $V$.  For example, if $A$ and $B$ are logical variables, then for
any fixed value $v$ of $V$ the $f$ function of Equation~(\ref{eq:f})
will take one of sixteen possible Boolean patterns, such as, for
example, $Y \ugualino A \vee B$, or $Y \ugualino \overline{A} \wedge
B$, and so on. Under appropriate assumptions, the researcher may be
able to infer that a certain pattern occurs in a random individual
with positive probability.  If the pattern is, say, $Y \ugualino A
\wedge B$ --- a pattern where the two effects are interdependent ---
one might take this as evidence that, in certain circumstances, $A$
and $B$ operate in the same mechanism.  \cite{Rothman1976},
\cite{Rothman1998}, \cite{Vanderweele2008}, \cite{Vanderweele2009},
\cite{VanderweeleRobinsAnnals} and \cite{Skrondal2003} have explored
this territory, and proposed a series of empirical conditions for
``interdependence'' of binary variables focused on mechanistic
interaction.  \cite{Vanderweele2010} extends this theory to
multi-level ordered categorical factors.

\newpar The mathematical form of the tests proposed here is similar to
those that the above authors have proposed for discrete causal
factors.  However, by introducing novel assumptions, we derive tests
valid in the more general case of categorical and continuous causal
factors, in any combination.

We also provide a different justification and different assumptions
for inference about mechanism, in a framework built around the above
notion of deep determinism.

Section~\ref{Coaction} introduces the concept of {\em interference} to
capture the idea of two variables, $A$ and $B$, influencing $Y$ by
operating through the same mechanism; this concept allows for
asymmetry in the way $A$ and $B$ interact.  Thus we say that $B$
interferes with $A$ in producing the event $Y\ugualino 1$ when $A$ and
$B$ are both causal factors for $Y$, and there exists a possible
intervention on $B$ which has the power of preventing any intervention
on $A$ from causing the event $Y\ugualino 1$.  This can occur without
also having $A$ interfering with $B$. We talk of {\em weak coaction}
[resp., strong coaction] when at least one [resp., either] of $A$ and
$B$ interferes with the other.

The above concepts are defined in terms of the behaviour of the system
under a (real or hypothetical) {\em interventional} regime, where $A$
and $B$ are {forced} to take on specific value values.  However in
Section~\ref{Testing for coaction} we show that the proposed tests can
be applied to data collected under under other regimes, {\em e.g.\/}
observational.  In Section~\ref{The core conditions}, the conditions
under which these tests are meaningful are studied in terms of
conditional independence properties of an Augmented Directed Acyclic
Graph (ADAG) representation of the problem (\cite{dawidisi}). The ADAG
will simultaneously represent the consensus causal theory about the
system under study, and assumptions about the behaviour of the system
across different regimes of data collection.  ADAGs are briefly
reviewed in Section~\ref{ADAGs}.  The scientific relevance of the
method and its practicality in complex study designs are illustrated
with the aid of two studies of the molecular determinants of coronary
artery disease, one of numerous areas in biomedical research where an
assumption of deep determinism could be defensible.

\section{\textcolor[rgb]{0.00,0.00,0.63}{\bf Interference and
    coaction}}
\label{Coaction}

Henceforth we make the deep determinism assumption of
Equation~\eqref{eq:f}.  The set of possible values of $A$ [resp., $B$,
$V$] is denoted by ${\cal A}$ [resp., ${\cal B}, {\cal V}$].

\begin{Definition} {\bf (Irrelevance)} Factor $B$ is {\em (causally)
    irrelevant} to $Y$ in context $V \ugualino v$, given $A$, if
  $f(a,b,v) = f(a,b',v)$ for all $a \in {\cal A}, b, b' \in {\cal B}$.
\end{Definition}

\begin{Definition}
  {\bf (Interference)} We say that $A$ {\em interferes} with $B$ in
  producing the event $Y\ugualino 1$ if, in some context $V\ugualino
  v$, $B$ is not irrelevant to $Y$ given $A$ and, for some $\hat a \in
  {\cal A}$ and all $b\in{\cal B}$,
  \begin{align}
    f(\hat a, {b},v) = 0.
    \label{asimmetrica}
  \end{align}
  \label{interference}
\end{Definition}
\noindent That is, in that context, there exists a value $\hat{a}$
such that, when we set $A\ugualino \hat{a}$, the event $Y\ugualino 1$
will never happen, whatever value we impose on $B$.

\noindent \begin{Definition} {\bf (Weak coaction)} We say that $A$ and
  $B$ {\em weakly coact\/} to produce the event $Y\ugualino 1$ if at
  least one of $A$ and $B$ interferes with the other to produce the
  event $Y\ugualino 1$.
  \label{weak coaction}
\end{Definition}

\noindent \begin{Definition} {\bf (Strong coaction)} We say that $A$
  and $B$ {\em strongly coact\/} to produce the event $Y=1$ if each of
  $A$ and $B$ interferes with the other to produce the event $Y=1$.
    \label{strong coaction}
  \end{Definition}

  \newpar {\bf Example (Logical):} Under a regime of intervention on
  variables $A \in \{0,1,2\}$ and $B \in \{0,1\}$, let the binary
  response $Y$ depend on these two variables according to the logical
  law $Y = (A\ugualino 2) \vee ((A \ugualino 1) \wedge (B\ugualino
  1))$.  Neither of $A$ or $B$ is irrelevant to $Y$.  Setting $A$ to
  the value 0 will prevent the event $Y\ugualino 1$, whatever the
  value we impose on $B$.  However, when $A$ is set to value 2, event
  $Y\ugualino 1$ will happen whatever value we impose on $B$.  Hence
  $B$ does {\em not} interfere with $A$, while
  $A$ interferes with $B$, in producing the event $Y\ugualino 1$. Thus,
  $A$ and $B$ coact weakly (but not strongly) in producing the event
  $Y\ugualino 1$.

\begin{figure}
  \begin{center}
    \begin{pspicture}(11,8) \psset{dipolestyle=elektor}
      \pnode(-1,2){Input} \pnode(12,2){Output} \pnode(4,0){C}
      \pnode(0,2){A} \pnode(4,0){C} \pnode(4,2){B} \pnode(4,4){D}
      \pnode(8,3){E} \pnode(8,0){F} \pnode(8,4){G} \pnode(8,2){H}
      \pnode(11,2){I} \switch(A)(B){$A_2$} \switch(B)(H){$A_1$}
      \rput(Input){${\cal G}$} \rput(Output){${\cal Y}$} \wire(B)(D)
      \wire(G)(H)
      \switch(H)(I){$V$} \switch(D)(G){$B$}
    \end{pspicture}
  \end{center}
  \caption{\small Electrical circuit illustration of coaction
    asymmetry. Imagine that an electrical voltage is applied between
    pins ${\cal G}$ and ${\cal Y}$. Let $Y=1$ indicate absence of
    current between these two pins. Let $Y=0$ indicate presence of
    current between these two pins. See main text for discussion of
    this example. \label{Circuit}}
\end{figure}
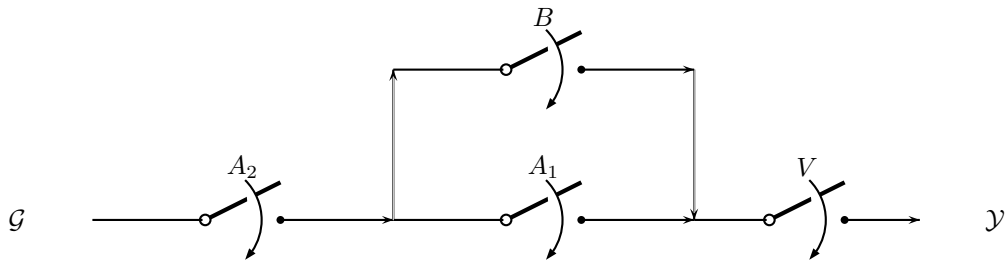

  \newpar {\bf Example (Electrical):} Consider the circuit of
  Figure~\ref{Circuit}, where we imagine an electrical voltage applied
  between pins ${\cal G}$ and ${\cal Y}$, and we take $Y=1$ [resp.,
  $Y=0$] to indicate that current flows [resp., does not flow] between
  these two pins.  Let the context variable be $U$, describing the
  unobserved state of the $U$--switch, each of the two possible states
  (OPEN, CLOSED) having positive probability.  Let variable $A$ index
  the four possible configurations of the $A$--switches, and variable
  $B$ the position of the $B$--switch.  The flow of current depends on
  the configuration of the switches via the well known deterministic
  laws of electrical circuits: this model thus satisfies deep
  determinism.  Then in context $U\ugualino$ CLOSED, variable $B$ is
  not irrelevant to $Y$ since, when $A_1$ is open and $A_2$ closed,
  acting on $B$ will have an effect on current flow. However, when
  $A_2 \ugualino$ is open, no intervention on the $B$--switch can restore
  the current flow. Hence, in context $U\ugualino $ CLOSED, variable
  $A$ interferes with $B$ in producing current flow.

  \newpar {\bf Example (Binary):} If $A$ and $B$ are binary,
  Equation~\eqref{eq:f} implies that, for a given value $v$ of $V$,
  the function $f$ takes one of sixteen possible patterns. First
  consider patterns $Y=TRUE$, $Y=FALSE$, $Y\ugualino A$, $Y\ugualino
  \overline{A}$, $Y\ugualino B$ and $Y\ugualino \overline{B}$. In all
  these patterns, at least one of $A$ or $B$ is irrelevant to $Y$, and
  therefore, by definition, neither of $A$ and $B$ interferes with the
  other in producing the event $Y\ugualino 1$. Next consider patterns
  $Y\ugualino A \vee B$, $Y\ugualino \overline{A} \vee B$, $Y\ugualino
  A \vee \overline{B}$ and $Y\ugualino \overline{A} \vee
  \overline{B}$, where the disjunctive form implies neither factor
  interferes with the other. Finally consider patterns $Y\ugualino A
  \wedge B$, $Y\ugualino \overline{A} \wedge B$, $Y\ugualino A \wedge
  \overline{B}$, $Y\ugualino \overline{A} \wedge \overline{B}$,
  $Y\ugualino (A \ugualino B)$ and $Y\ugualino (A \ne B)$, where
  neither of $A$ and $B$ is irrelevant, and where no value of $A$
  [resp., of $B$] produces the event $Y\ugualino 1$ unless $B$ [resp.,
  $A$] takes on a particular value. Hence, in these last six patterns,
  each of $A$ and $B$ interferes with the other in producing the event
  $Y\ugualino 1$. We conclude that, in the special case where $A$ and
  $B$ are binary, there can be no interference asymmetry between $A$
  and $B$: either they do or they do not interfere each with each
  other. Thus in this case weak and strong coaction coincide, and are
  essentially equivalent to the notion of interdependence given by
  \cite{Vanderweele2008}.

  \newpar {\bf Example (Biological determinism):} Suppose a genetic
  mutation $A$ can induce a structural change in protein $\alpha$,
  causing disease $Y$ in certain individuals when the protein is
  expressed normally. Hence $A$ is not irrelevant to $Y$.  Mutation
  $B$, located in the promoter region of the coding gene of $\alpha$,
  reduces the level of expression of $\alpha$.  As a consequence, in
  the above individuals, presence of $B$ prevents any structural
  disfunctionality in protein $\alpha$ from causing the disease.  In
  this case $B$ interferes with $A$ in causing disease $Y$ --- an
  example of what geneticists call ``epistasis''.

  \newpar We conclude this section with a remark. We have discussed
  ``coaction to produce''.  We could similarly have defined ``coaction
  to prevent''.  Coaction to prevent does not imply coaction to
  produce, nor {\em vice versa\/}. The scientific application and
  question of interest will usually dictate interest in one of the two
  directions.

  \section{\textcolor[rgb]{0.00,0.00,0.63} {\bf Monotonicity}}
  \label{Monotonicity}

  Sometimes we may be able to make assumptions about the ordering of
  the values of $Y$ in response to configurations of $A$ and $B$.  In
  the electrical example of the previous section, for example,
  increasing the number of switches in CLOSED position can never cause
  the current flow to be switched off.  Sometimes assumptions of this
  kind can be formulated as properties of {\em monotonicity}, as
  follows.

\begin{Definition}
  The effect of $A$ upon $Y$ is said to be {\em non-decreasing (with
    respect to $B$\/)} if, for any configuration $(b,v)$ of $(B,V)$,
  the following implication holds: $f({a},{b},v) = 1 \;\; \mbox{AND}
  \;\; {a}^{'} \ge {a} \,\Rightarrow\, f({a}^{'},{b},v) = 1.$
  \label{positive monotonic}
\end{Definition}
\begin{Definition}
  The effect of $A$ upon $Y$ is said to be {\em non-increasing (with
    respect to $B$)\/} if, for any configuration $(b,v)$ of $(B,V)$,
  the following implication holds: $f({a},{b},v) = 0 \;\; \mbox{AND}
  \;\; {a}^{'} \ge {a} \,\Rightarrow\, f({a}^{'},{b},v) = 0.$
  \label{negative monotonic}
\end{Definition}
\begin{Definition}
  The effect of $A$ upon $Y$ is said to be {\em monotonic (with
    respect to $B$)\/} if it is either non-decreasing or
  non-increasing with respect to $B$.
  \label{monotonic}
\end{Definition}

\begin{Definition}
  The effect of $A$ upon $Y$ is said to be {\em consistent (with
    respect to $B$)\/} if whenever, for any $(a_1, a_2)$ pair, the
  inequality $f(a_1, b, v) \ge f(a_2, b, v)$ holds for some $(b,v)$
  configuration, it holds for all $(b,v)$ configurations.
  \label{consistent}
\end{Definition}

\newpar Clearly monotonicity implies consistency; conversely, under
consistency we can re-order the values to yield monotonicity.
\cite{Berrington2007} discuss the situation where a change in the
value of $A$ may give rise to a reversal of the effect of $B$ upon
outcome.  Such {\em qualitative interaction} violates consistency.
Some authors consider qualitative interaction to be interpretable in
terms of mechanism. A formal test, different from the standard
statistical test for departures from additivity, should be performed
to assess whether a qualitative interaction could be due to chance
variation. One such test has been proposed by \cite{Azzalini1984}. The
tests proposed in this paper, which also differ from standard
statistical interaction tests, establish conditions for an
interpretation of interaction in terms of mechanism without
necessarily requiring that the underlying interaction be qualitative.

\section{\textcolor[rgb]{0.00,0.00,0.63} {\bf Augmented Directed
    Acyclic Graphs}}
\label{Augmented Directed Acyclic Graphs}

Coaction has been defined under a (real or hypothetical)
interventional regime.  The tests for coaction we shall later propose
may be applied more generally, such as when the data are
observational.  This, however, will require stringent assumptions, for
example that $V$ be conditionally independent of $A$ and $B$ and of
the way these two variables have been generated. In many applications
it will be possible, and is then helpful, to represent such
assumptions, in combination with further assumptions based on our
causal understanding of the problem, by means of an Augmented Directed
Acyclic Graph (ADAG).

\newpar Examples of ADAGs are given in Figure~\ref{Figure
  1}. Figure~\ref{Figure 1}{\em (b)} is an ADAG specialisation of the
simple problem setting of Figure~\ref{Figure 1}{\em (a)}.  An
important feature of ADAGs is inclusion of {\em intervention
  indicators\/}, exemplified in Figures~\ref{Figure 1}{\em (b)--(c)}
by nodes $\sigma_A$ and $\sigma_B$.  These nodes take values
indicating the particular regime, observational or experimental, under
which the values of a corresponding domain variable arise.  With $A$
and $B$ binary, for example, each of $\sigma_A$ and $\sigma_B$ will
have possible values in $(\emptyset, 0, 1)$, the interpretation being
that, when $\sigma_A = \emptyset$, the variable $A$ is generated
randomly by Nature, under the circumstances governing the
observational data; while $\sigma_A = {a} \in \{0,1\}$ indicates an
interventional setting in which value $a$ is imposed on $A$; and
similarly for $B$.  Although regime indicators are not random
variables, we can still query the ADAG, using the {\em
  $d$-separation\/} criterion of \cite{geiger90}, or the equivalent
{\em moralisation\/} criterion of \cite{lauritzen90}, to read off
conditional independencies implied by the graph.  These independencies
will generally reflect properties of the system under study {\em
  and\/} judgements about the way we expect the system to behave under
data collection regimes different from the actual one.  The graphs of
Figures~\ref{Figure 1}{\em (b)} and \ref{Figure 1}{\em (c)}, for example, embody
the conditional independence property, expressed in the notation of
\cite{Dawid1979}: $Y \,\indep (\sigma_A,\sigma_B) \mid (A,B)$, read as
``$Y$ is conditionally independent of $(\sigma_A,\sigma_B)$, given
$(A,B)$".  This represents an assumed property of invariance across
regimes: that once we know the values of $A$ and $B$, the distribution
of $Y$ will not further depend on the regime of data collection, as
represented by $(\sigma_A, \sigma_B)$. In other words, in these two examples, the
distribution of $Y$ does not depend on the way the $(A,B)$
configuration has arisen, be it observationally or interventionally.

\section{\textcolor[rgb]{0.00,0.00,0.63} {\bf The core conditions}}
\label{The core conditions}

Identifiability conditions for mechanistic interaction are typically
succinctly stated in terms of the effects of $A$ and $B$ on $Y$ having
to be ``unconfounded'', conditional on some observed variable $C$.  We
adopt a different approach, assuming a consensus ADAG representation
of the problem is available.  Conditions for validity of the test
proposed in the next section are then phrased in terms of conditional
independence properties of the ADAG. This discipline allows us to be
more precise in our claims than a formulation in terms of ``no
confounding''.  Another advantage of the ADAG-based approach is that
it makes it easier to relate the conditions for applicability of a
test to the substantive assumptions about the problem.

\newpar The assertion ``the (joint) effects of $A$ and $B$ on $Y$ are
unconfounded'' might be interpreted as saying that there exists an
observed variable $C$ such that the following two conditions are
satisfied:

\begin{align}
  \label{unconfoun}
  C \indep \sigma \hspace{1cm} \mbox{and} \hspace{1cm} Y \indep \sigma
  &\mid (A,B,C)
\end{align}
\noindent where $\sigma := (\sigma_A, \sigma_B)$, with possible values
$\sigma = (a,b)$, corresponding to setting $A=a, B=b$, and $\sigma =
(\emptyset,\emptyset)$, also denoted by $\sigma = \emptyset$, when
both $A$ and $B$ arise naturally.  In this case we say that $C$ is a
{\em sufficient covariate\/} for the joint effects of $A$ and $B$ on
$Y$ (\cite{hg/apd:aistats2010}).  In accordance with the ``back--door
criterion'' of~\cite{pearlbook}, under these conditions the joint
causal effect of $(A, B)$ on $Y$ will be estimable from observational
data when $C$ is also observed.  Note that these conditions need not
imply that $C$ is sufficient for the individual causal effects of each
of $A$ and $B$ on $Y$ (which would involve extending (\ref{unconfoun})
to apply also when only one factor is intervened on, {\em i.e.\/} for
$\sigma$ of the form $(a, \emptyset)$ or $(\emptyset, b)$).  Thus in
cases {\em (b)} and {\em (c)} of Figure~\ref{Figure 1}, $C \ugualino
\emptyset$ is sufficient for the joint effects of $A$ and $B$ on $Y$,
but is sufficient for the individual effects only in
Figure~\ref{Figure 1} {\em (b)}.

\newpar However, neither sufficiency for the joint effects nor sufficiency for
the individual effects is what we need to ensure applicability of the
test of the next section for a general regime of data collection.  In
our analysis, the additional observable variable $C$ must be a
function of the overall context variable $V$ featuring in the ``deep
determinism'' property (\ref{eq:f}).  Thus we can consider $V = (C,
U)$, with $C$ observed and $U$ unobserved.

We shall require the simultaneous validity of the following four {\em
  core conditions\/}:
\begin{Definition}
  \label{core conditions} {\bf (Core conditions)} There exists a
  (possibly empty) set $C$ of observable context variables and a set
  $U$ of (typically unobserved) context variables such that:
  \begin{enumerate}
  \item (deep determinism) $Y\ugualino f(A,B,C,U)$ for some
    deterministic function $f$, which is the same no matter how the
    variables $(A, B, C, U)$ are generated.
  \item $Y \indep \sigma \mid (A,B,C,U)$
  \item $U \indep (A,B,\sigma) \mid C $,
  \item $A \indep B \mid (C, \sigma)$.
  \end{enumerate}
\end{Definition}
\noindent
Whenever Condition~1 is satisfied, we say that $Y$ is {\em
  functional\/} with respect to $(A,B,C,U)$.  Condition~2 essentially
repeats the second part of Condition~1, but it is helpful to display
it explicitly.  Condition~3 says that, conditionally on $C$,
variable $U$ has
the same distribution in all regimes, and is independent of $A$ and
$B$ (this will hold, in particular, if the full context variable
$(C,U)$ has the same distribution in all regimes and is independent of
$A$ and $B$); while Condition~4 requires $A$ and $B$ to be
independent, given $C$, in the observational regime (this property
necessarily holding when $A$ and $B$ are set by intervention).

The following theorem can be proved straightforwardly using general
properties of conditional independence (\cite{Dawid1979},
\cite{pearlbook}).

\begin{theorem} Core conditions 2 and 3 imply $Y \indep \sigma \mid
  (A,B,C)$.
\end{theorem}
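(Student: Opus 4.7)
The plan is to derive the desired conditional independence by a short chain of applications of the semi-graphoid axioms (\cite{Dawid1979}), specifically weak union, contraction, and decomposition, applied to Core conditions 2 and 3.

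First, I would process Core condition 3, namely $U \indep (A,B,\sigma) \mid C$, by weak union (or equivalently by decomposition followed by weak union) to extract the weaker statement
\begin{equation*}
 U \indep \sigma \mid (A,B,C).
\end{equation*}
This step is essentially free: it just says that adding $(A,B)$ to the conditioning set preserves the independence of $U$ from $\sigma$.

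Next, I would pair this with Core condition 2, $Y \indep \sigma \mid (A,B,C,U)$, and invoke the contraction axiom (if $X \indep Z \mid (W,Y)$ and $Y \indep Z \mid W$, then $(X,Y) \indep Z \mid W$). Taking $X=Y$, $Y=U$, $Z=\sigma$, $W=(A,B,C)$, contraction delivers
\begin{equation*}
 (Y,U) \indep \sigma \mid (A,B,C).
\end{equation*}
Finally, decomposition on this joint statement yields $Y \indep \sigma \mid (A,B,C)$, which is the desired conclusion.

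There is no real obstacle here: the argument is entirely formal manipulation within the conditional independence calculus, and the only thing to be careful about is that all the hypotheses of contraction are supplied in the right form (in particular that the conditioning set in Core condition 2 is precisely $(A,B,C)$ augmented by $U$, which matches what weak union produces from Core condition 3). The role of $U$ in the argument is that of an auxiliary variable that is ``integrated out'' by decomposition at the last step, which is the whole point of the result: it licenses removing the unobserved context $U$ from the conditioning set of the invariance-across-regimes property.
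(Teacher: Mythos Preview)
Your argument is correct and is precisely the kind of derivation the paper has in mind: the paper itself gives no explicit proof, merely stating that the result follows ``straightforwardly using general properties of conditional independence'' with a reference to \cite{Dawid1979} and \cite{pearlbook}. Your weak-union/contraction/decomposition chain is exactly such a derivation, and the substitutions you make in the contraction step are correct.
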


\newpar Our core conditions imply the second condition of
Equation~\eqref{unconfoun}, but not the first. It seems useful and
instructive to discuss the differences between the two sets of
conditions with the aid of examples.  In the following examples
interest focuses on testing coaction of variables $A$ and $B$ in
producing the event $Y\ugualino 1$, based on observational data about
variables $(A,B,Y)$ and, sometimes, a further variable $Z$.

\newpar Figures~\ref{Figure 1}{\em (b)--(c)} satisfy
the conditions of Equation~\eqref{unconfoun} when $C \ugualino
\emptyset$. In both these examples, the
distribution of $Y$ given $(A,B)$ does not depend on the way the
configuration of values of $(A,B)$ is generated, be it
observationally or by intervention. However, while Figure~\ref{Figure 1}{\em (b)}
satisfies the core conditions once we assume $Y$ to be functional
with respect to $(A,B,U)$), Figure~\ref{Figure 1}{\em (c)} violates
core condition 4.

\begin{figure}
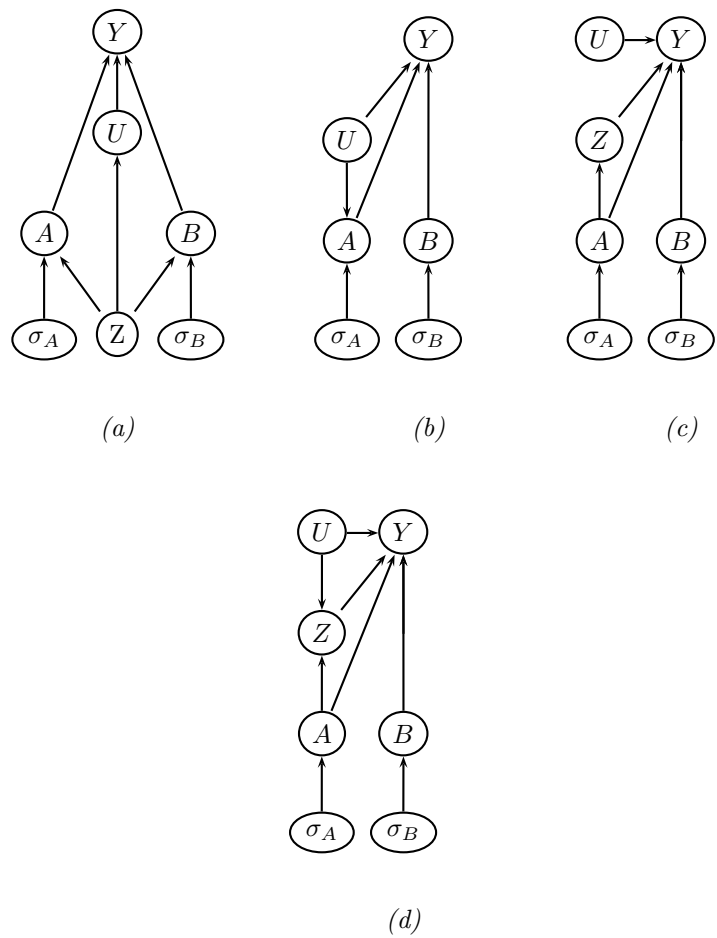

  \begin{center}
    \psframebox[fillstyle=solid,fillcolor=white,
    linestyle=none,framesep=.4,arrowsize=2]{%
      \begin{psmatrix}[ref=t,rowsep=0.7cm,colsep=0.2cm]
        &\Show{$Y$}\\
        &\Show{$U$}&&\\
        \Show{$A$}&&\Show{$B$}\\
        \Show{$\sigma_A$}&\Show{Z}&\Show{$\sigma_B$}\\
        &{\em (a)} \ncLine{->}{2,2}{1,2}\Aput{$$}
        \ncLine{->}{3,1}{1,2}\Aput{$$} \ncline{->}{3,3}{1,2}\Aput{$$}
        \ncline{->}{3,4}{1,2}\Aput{$$} \ncline{->}{4,2}{2,2}\Aput{$$}
        \ncline{->}{4,2}{3,1}\Aput{$$} \ncLine{->}{4,2}{3,3}\Bput{$$}
        \ncline{->}{4,1}{3,1}\Aput{$$} \ncLine{->}{4,3}{3,4}\Bput{$$}
        \ncLine{->}{4,4}{3,4}\Bput{$$} \ncLine{->}{4,3}{3,3}\Bput{$$}
      \end{psmatrix}} \hspace{-0.1cm}
    \psframebox[fillstyle=solid,fillcolor=white,
    linestyle=none,framesep=.4,arrowsize=2]{%
      \begin{psmatrix}[ref=t,rowsep=0.7cm,colsep=0.2cm]
        &\Show{$Y$}\\
        \Show{$U$}&&&\\
        \Show{$A$}&\Show{$B$}&\\
        \Show{$\sigma_A$}&\Show{$\sigma_B$}&\\
        &{\em (b)} \ncLine{->}{2,1}{1,2}\Aput{$$}
        \ncLine{->}{2,1}{3,1}\Aput{$$} \ncLine{->}{3,1}{1,2}\Aput{$$}
        \ncline{->}{3,2}{1,2}\Aput{$$} \ncLine{->}{4,2}{3,2}\Bput{$$}
        \ncline{->}{4,1}{3,1}\Aput{$$}
      \end{psmatrix}} \hspace{-0.1cm}
    \psframebox[fillstyle=solid,fillcolor=white,
    linestyle=none,framesep=.4,arrowsize=2]{%
      \begin{psmatrix}[ref=t,rowsep=0.7cm,colsep=0.2cm]
        \Show{$U$}&\Show{$Y$}\\
        \Show{$Z$}&\\
        \Show{$A$}&\Show{$B$}\\
        \Show{$\sigma_A$}&\Show{$\sigma_B$}\\
        &{\em (c)} \ncLine{->}{1,1}{1,2}\Aput{$$}
        \ncLine{->}{2,1}{1,2}\Aput{$$} \ncline{->}{2,2}{1,2}\Aput{$$}
        \ncLine{->}{3,2}{1,2}\Bput{$$} \ncline{->}{3,1}{1,2}\Aput{$$}
        \ncline{->}{3,1}{2,1}\Aput{$$} \ncline{->}{4,1}{3,1}\Aput{$$}
        \ncline{->}{4,2}{3,2}\Aput{$$}
      \end{psmatrix}} \hspace{-0.1cm}
    \psframebox[fillstyle=solid,fillcolor=white,
    linestyle=none,framesep=.4,arrowsize=2]{%
      \begin{psmatrix}[ref=t,rowsep=0.7cm,colsep=0.2cm]
        \Show{$U$}&\Show{$Y$}\\
        \Show{$Z$}&\\
        \Show{$A$}&\Show{$B$}\\
        \Show{$\sigma_A$}&\Show{$\sigma_B$}\\
        &{\em (d)} \ncLine{->}{1,1}{1,2}\Aput{$$}
        \ncLine{->}{1,1}{2,1}\Aput{$$} \ncLine{->}{2,1}{1,2}\Aput{$$}
        \ncline{->}{2,2}{1,2}\Aput{$$} \ncLine{->}{3,2}{1,2}\Bput{$$}
        \ncline{->}{3,1}{1,2}\Aput{$$} \ncline{->}{3,1}{2,1}\Aput{$$}
        \ncline{->}{4,1}{3,1}\Aput{$$} \ncline{->}{4,2}{3,2}\Aput{$$}
      \end{psmatrix}}
  \end{center}
  \caption{\small ADAG representations of problem examples discussed
    in the main text.  \label{ADAGs}}
\end{figure}

\newpar Now consider the example of Figure~\ref{ADAGs}{\em (c)}.  If
the researcher engaged in a test of coaction between $A$ and $B$
follows the ``no confounding'' conditions of
Equation~\eqref{unconfoun}, he or she will notice that these
conditions are satisfied for $C\ugualino \emptyset$, and might
therefore proceed to perform the test without conditioning on $Z$. By
contrast, if the researcher follows the core conditions of
Definition~\ref{core conditions}, he/she will notice that ignoring $Z$
(that is, setting $C\ugualino \emptyset$), is valid {\em only\/} under
the assumption that $Y$ is functional with respect to $(A,B,U)$. This
appears to be a tremendously stringent assumption, which we may accept
only if, for every value of $U$, variable $Y$ is a deterministic
function of $(A,B,Z)$ {\em and} $Z$ is a deterministic function of
$A$. A more appropriate choice, according to the conditions of
Definition~\ref{core conditions}, is to set $C \ugualino Z$.  The
latter choice would make more sense from a further point of view, that
is, it would test coaction of the effect of $B$ (on $Y$) and the {\em
  direct} effect of $A$ (on $Y$), unmediated by $Z$. In summary, in
this example, the two sets of conditions lead to different choices, in
the sense that the best choice according to the core conditions
violates the ``no confounding'' conditions of
Equation~\eqref{unconfoun}.

\newpar Many of the above considerations also apply to the example of
Figure~\ref{ADAGs}{\em (d)}. In particular, in this last example,
setting $C\ugualino \emptyset$ would appear a safe option according to
the `no confounding'' conditions of Equation~\eqref{unconfoun}. And it
would, in addition, satisfy core conditions {\em (2)} to {\em (4)}.  A
possible difficulty with this choice would however arise when
negotiating core condition 1. In the light of core condition 1, choice
$C \ugualino \emptyset$ means we are ready to assume $Y$ to be
deterministic when we condition on $(A,B,U)$, {\em but not} on
$Z$. This is sensible only if we believe $Z$ to be {\em itself} is a
deterministic function of its predecessors in the graph.  Neither does
the option $C\ugualino Z$, in this example, solve the problem.  For
conditioning on $Z$ will typically introduce dependence between $U$
and $A$, violating core condition 3.

\section{\textcolor[rgb]{0.00,0.00,0.63}{\bf Testing coaction}}
\label{Testing for coaction}

We now present a test for coaction of variables $A$ and $B$ in
producing the event $Y\ugualino 1$, assuming that there exists a
(possibly empty) set $C$ of observed variables such that the core
conditions of the previous section are valid.  We allow $A$ and $B$ to
be ordered categorical or continuous variables, in any combination.
If either variable is not binary, we consider some dichotimisation of
its range.  Thus for $A$ we would choose a threshold $\tau_A$ and
define $\alpha := \{a \in{\cal A}: a > \tau_A\}$, $\overline{\alpha}
:= \{a\in{\cal A}: a \le \tau_A$\}.  Similarly for $B$ we would have
$\tau_B, \beta, \overline{\beta}$.  We also use $\alpha$ to denote the
truth-value ($0$ or $1$) of the event $A \in \alpha$, {\em etc\/}.

In the sequel, all probabilities are computed under the observational
regime $\sigma = \emptyset$.

\noindent For $i, j = 0, 1$, let
\begin{align*}
  R_{ijc} &:= P(Y\ugualino 1 \mid \alpha \ugualino i, \beta \ugualino
  j, C\ugualino c)\\
  &\stackrel{\mbox{\footnotesize core condition 3}}{=} \; \int_u
  R_{ijc}(u)\; P(u \mid C\ugualino c)\; \d u
\end{align*}
where, for any value $u$ of $U$,
\begin{align}
  \label{iniziale}
  R_{ijc}(u) &:= P(Y\ugualino 1 \mid \alpha = i, \beta = j,
  C\ugualino c, U\ugualino u) \nonumber\\
  &= \int_{\alpha=i} \d a \int_{\beta=j} \d b\, P(Y\ugualino 1 \mid a,
  b, C\ugualino c, U\ugualino u)\; P(a,b \mid \alpha = i, \beta = j,
  C\ugualino c, U\ugualino u )
  \nonumber \\
  &\stackrel{\mbox{\footnotesize core conditions 1, 2}}{=}
  \int_{\alpha=i} \d a \int_{\beta=j} \d b\, f(a,b,c,u)\; P(a,b \mid
  \alpha = i, \beta = j, C\ugualino c, U\ugualino u)
  \nonumber \\
  &\stackrel{\mbox{\footnotesize core condition 3}}{=} \;
  \int_{\alpha=i} \d a \int_{\beta=j} \d b\, \frac{f(a,b,c,u)\; P(a,b
    \mid C\ugualino c)} {P(\alpha = i, \beta = j \mid C\ugualino c)}
  \nonumber \\
  &\stackrel{\mbox{\footnotesize core condition 4}}{=} \;
  \int_{\alpha=i} \d a\, {P(a \mid \alpha = i, C\ugualino c)}
  \int_{\beta=j} \d b\,f(a,b,c,u)\; P(b \mid \beta = j, C\ugualino c).
\end{align}
\begin{Definition}
  \label{insensatez}
  Variable $A$ is said to be {\em $\alpha$-insensitive} with respect
  to $Y$ if the following implication is valid
  for all $(b,c,u)$:
  \begin{align}
    \label{ordinale1}
    \mbox{IF} \;\;&f(a,b,c,u) = 0 \;&\mbox{for some} \; a \in
    \alpha \;\mbox{AND} \;\; a^{'} \ge a \;\; &\mbox{ THEN} &f(a^{'},b,c,u) &= 0
  \end{align}
\end{Definition}
\noindent We similarly define the $\beta$-insensitivity property for
$B$.  Trivially $\alpha$-insensitivity holds if $\alpha$ consists of a
single point. We are now ready to state the main theorem:
\begin{theorem}
  \label{teorema principale}
  Let the binary outcome variable $Y$ depend on observed variables
  $(A,B,C)$ and on unobserved variable $U$, where $A$ and $B$ are
  allowed to be ordered categorical or continuous, in any combination
  of these two types.  Let the effect of $A$ [resp., $B$] upon $Y$ be
  monotonic with respect to $B$ [resp., $A$], and suppose that, for
  some dichotomizations of $A$ and $B$, and some value $c$ of $C$:

  \begin{align}
    \label{condizioneprimoteorema}
    R_{11c}-R_{10c}-R_{01c} &> 0.
  \end{align}
  \noindent Then under the core conditions and the
  $\alpha$--insensitivity property for $A$, variable $B$ interferes
  with $A$ in producing the event $Y=1$.  Similarly, whenever the
  $\beta$--insensitivity property holds for $B$, variable $A$
  interferes with $B$ in producing the event $Y=1$; in either case $A$
  and $B$ weakly coact to produce the event $Y \ugualino 1$.
\end{theorem}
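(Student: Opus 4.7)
The plan is to work throughout at a single, pointwise value of the unobserved context, and then exploit $\alpha$-insensitivity together with monotonicity to force a sharp dichotomy on $f$ that yields both clauses of the interference definition at once. By core condition 3, the representation in equation~\eqref{iniziale} gives
\begin{equation*}
R_{11c} - R_{10c} - R_{01c} = \int \bigl[R_{11c}(u) - R_{10c}(u) - R_{01c}(u)\bigr]\, P(u \mid C = c)\, du,
\end{equation*}
so hypothesis~\eqref{condizioneprimoteorema} forces the existence of some $u^*$ in the support of $P(\cdot \mid C=c)$ at which $R_{11c}(u^*) - R_{10c}(u^*) - R_{01c}(u^*) > 0$. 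Fix such a $u^*$ and take $v := (c, u^*)$ as the context witnessing interference. A pointwise comparison shows that if monotonicity were non-increasing in $A$ we would have $R_{11c}(u^*) \le R_{01c}(u^*)$, contradicting the hypothesis; so we may assume both monotonicities are non-decreasing.

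The heart of the argument is a pointwise dichotomy on ${\cal B}$: for every $b$, either (i) $f(a, b, v) = 1$ for all $a \in \alpha$, or (ii) $f(a, b, v) = 0$ for all $a \in {\cal A}$. If (i) fails then some $a^* \in \alpha$ gives $f(a^*, b, v) = 0$; non-decreasing monotonicity then forces $f = 0$ on $\{a : a \le a^*\}$ and $\alpha$-insensitivity forces $f = 0$ on $\{a : a \ge a^*\}$, and together these cover all of ${\cal A}$. Writing $B_1$ and $B_0$ for the sets of $b \in {\cal B}$ satisfying (i) and (ii) respectively, the inner integrals over $\alpha$ in the last line of~\eqref{iniziale} collapse to indicators, giving $R_{11c}(u^*) = P(B \in B_1 \mid B \in \beta, C=c)$ and $R_{10c}(u^*) = P(B \in B_1 \mid B \in \overline{\beta}, C=c)$. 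Because $R_{11c}(u^*) \le 1$ and $R_{01c}(u^*) \ge 0$, the pointwise inequality forces $R_{10c}(u^*) < 1$, so $B_0 \cap \overline{\beta}$ has positive conditional probability and is in particular non-empty. Any $\hat b$ in it satisfies $f(a, \hat b, v) = 0$ for every $a$, which is the ``absorbing value'' clause of Definition~\ref{interference} applied with the roles of $A$ and $B$ swapped.

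To finish, suppose for contradiction that $A$ is irrelevant to $Y$ given $B$ at $v$, i.e.\ $a \mapsto f(a, b, v)$ is constant for every $b$; by the dichotomy this constant is $1$ on $B_1$, so the inner $\overline{\alpha}$-integral in~\eqref{iniziale} equals $\mathbf{1}\{b \in B_1\}$, giving $R_{01c}(u^*) = P(B \in B_1 \mid B \in \beta, C=c) = R_{11c}(u^*)$ and collapsing our strict inequality to $-R_{10c}(u^*) \le 0$, a contradiction. Hence $B$ interferes with $A$; the symmetric argument under $\beta$-insensitivity shows $A$ interferes with $B$, and weak coaction follows from Definition~\ref{weak coaction}. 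The main obstacle is bookkeeping rather than conceptual: one must verify measurability of $B_0, B_1$ and justify the passage from the integrated inequality to a pointwise one at a particular $u^*$, which is routine but needs care because $A$ and $B$ may be continuous. The real content lies in the dichotomy, which is what allows $\alpha$-insensitivity to distil the binary sufficient-cause reasoning out of the continuous setting.
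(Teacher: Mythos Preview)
Your proof is correct, and it hinges on the same core mechanism as the paper's own argument: once one has a single $a^*\in\alpha$ with $f(a^*,b,v)=0$, non-decreasing monotonicity in $A$ together with $\alpha$-insensitivity propagates the zero to every $a\in{\cal A}$. The organisation, however, is genuinely different. The paper first extracts specific witness points $a_1,b_1,b_2$ from $R_{11c}(u^*)-R_{10c}(u^*)>0$ and, separately, $a_2,a_3,b_3$ from the symmetric inequality $R_{11c}(u^*)-R_{01c}(u^*)>0$; it uses the second triple to pin down that $f$ is non-decreasing in $A$, and then applies the propagation step once, at the particular value $b_2$. You instead rule out the non-increasing case for $A$ upfront by a direct comparison $R_{11c}(u^*)\le R_{01c}(u^*)$, and then prove the propagation uniformly in $b$ as a dichotomy $B_1\cup B_0={\cal B}$. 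That dichotomy is the real gain: it collapses the $\alpha$-integrals in \eqref{iniziale} so that $R_{11c}(u^*)$ and $R_{10c}(u^*)$ become literal conditional probabilities of $B_1$, making both the existence of an absorbing $\hat b$ (from $R_{10c}(u^*)<1$) and the non-irrelevance of $A$ (from $R_{01c}(u^*)<R_{11c}(u^*)$) immediate. The paper's route is more hands-on and yields explicit witnesses; yours is structurally cleaner and makes transparent that $B$'s monotonicity is not actually used in the ``$B$ interferes with $A$'' half. Two small points of presentation: your contradiction in the irrelevance step should read $-R_{10c}(u^*)>0$ rather than $\le 0$ (the strict inequality is what forces $R_{10c}(u^*)<0$, which is impossible); and the ``contradicting the hypothesis'' in the monotonicity-direction step really contradicts the pointwise inequality at $u^*$, not \eqref{condizioneprimoteorema} itself.
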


\begin{proof}
  \small Equation~\eqref{condizioneprimoteorema} can be expressed as
  \begin{align}
    \int \left[ R_{11c}(u) - R_{10c}(u) - R_{01c}(u) \right] p(\d u
    \mid C\ugualino c) &> 0.
    \label{prima}
  \end{align}
  It follows that there is a positive probability of obtaining a value
  $u^{*}$ of $U$ such that $R_{11c}(u^*) - R_{10c}(u^*) - R_{01c}(u^*)
  > 0$; in particular, $R_{11c}(u^*) - R_{10c}(u^*) > 0$.  Thus, using
  \eqref{iniziale},
  \begin{align*}
    \int_{a \in \alpha} \d a\, P(a \mid A \in \alpha, C\ugualino c)
    &\left[ \int_{b \in \beta} \d b\, f(a,b,c,u^*)\, P(b \mid B \in
      \beta, C\ugualino c) \right. \\
      &\left. - \int_{b \in \overline{\beta}} \d b
      \,f(a,b,c,u^*)\; P(b \mid B \in \overline{\beta}, C\ugualino c )
    \right] &> 0,
  \end{align*}
  \noindent from which it follows that there exists a value $a_1 \in
  \alpha$ such that
  \begin{align}
    \int_{b \in \beta} \d b \,f(a_1,b,c,u^*)\; P(b \mid B \in \beta,
    C\ugualino c) > \int_{b \in \overline{\beta}} \d b
    \,f(a_1,b,c,u^*)\; P(b \mid B \in \overline{\beta}, C\ugualino c).
    \label{terza}
  \end{align}

  \noindent Since the left-hand-side of the above inequality is thus
  positive, and $f = 0$ or $1$, we must have
\begin{align}
    \label{first bit}
    f(a_1,b_1,u^{*},c) &=1 &\mbox{for some}\;\; b_1 \in \beta.
  \end{align}
  Also we cannot have have $f(a_1,b,u^{*},c) = 1$ for all
  $b\in\overline{\beta}$, since in this case the right-hand side of
  \eqref{terza} would equal 1, whereas the left-hand side can not
  exceed 1.  We deduce that
  \begin{align}
    \label{second bit}
    f(a_1, b_2, u^{*},c) &= 0, &\mbox{for some} \;\; b_2 \in
    \overline{\beta}.
  \end{align}
  \noindent Because Equation~\eqref{prima} is symmetrical in $A$ and
  $B$, we similarly obtain:
  \begin{align}
    \label{scambiata1}
    f(a_2,b_3,u^{*},c) &=1, &\mbox{for some} \;\; a_2 \in \alpha \;\; \mbox{and}  \;\;  b_3 \in \beta,\\
    \label{scambiata2}
    f(a_3,b_3,u^{*},c) &=0, &\mbox{for some} \;\; a_3 \in
    \overline{\alpha}.
  \end{align}

  \newpar Under the assumed monotonicity of the effect of $B$ upon
  $Y$, and remembering that $\beta$ lies above $\overline{\beta}$,
  Equations~\eqref{first bit}--\eqref{second bit} imply that $f$ is
  non-decreasing with $B$ for any configuration of
  $(A,C,U)$. Similarly,
  Equations~\eqref{scambiata1}--\eqref{scambiata2} imply that $f$ is
  non-decreasing with $A$ for any configuration of $(A,C,U)$.
  Equations~\eqref{first bit}~\eqref{second
    bit}~\eqref{scambiata1} and~\eqref{scambiata2} tell us that there
  is a context $(U, C) = (u^{*},c)$ where variables $A$ and $B$ are
  {\em not} irrelevant to $Y$ with respect to each other.  Then
  according to Definition~\ref{interference}, in order to prove that
  $B$ interferes with $A$ in producing the event $Y \ugualino 1$, we
  only need prove that, for some value imposed on $B$, no value of $A$
  will produce the event $Y \ugualino 1$, that is:
  \begin{align}
  f(a, b_2, u^{*},c) \; &= \; 0 \;\; \;\;\; \forall a .
  \label{finale}
  \end{align}
  \newpar In fact, the following two
  implications follow from Equation~\eqref{second bit}:
  \begin{align*}
  a^{*} < a_1 \; &\Rightarrow \; f(a^{*}, b_2, u^{*},c) &\stackrel{\mbox{\footnotesize $f$
  non-decreasing with $A$}}{=} \; 0\\
  a^{*} \ge a_1 \;\;\mbox{AND} \;\; a_1 \in \alpha \; &\Rightarrow \;
  f(a^{*}, b_2, u^{*},c) &\stackrel{\mbox{\footnotesize $A$ is $\alpha$-insensitive wrt $Y$}}{=} \; 0
  \end{align*}
  \noindent from which Equation~\eqref{finale} follows.
  We then conclude that, under an assumed {\em $\alpha$--insensitivity
    condition} for $A$, Equation~\eqref{condizioneprimoteorema}
  implies that variable $B$ interferes with $A$ in producing the event
  $Y=1$. Similarly we can prove that, under an assumed
  $\beta$--insensitivity property for $B$, variable $A$ interferes
  with $B$ in producing the event $Y=1$, which completes the proof.
\end{proof}

\newpar {\bf Remark 1:} The theorem
holds also in those situations where we
can have its conditions satisfied by
an appropriate recoding of $A$ and $B$.

\newpar {\bf Remark 2:} The theorem can be applied
conditional on the generic individual belonging
to a particular population stratum defined
on the basis of $(A,B)$.

\newpar  The following example illustrates
the two remarks above.  Consider a discrete variable $A \in
\{1,2,3,4\}$ and a continuous variable $B$ on the $(0,1)$
interval.  Assume monotonicity of the effects
of $A$ and $B$ on $Y$, and let us restrict attention
to the stratum of individuals with $A > 1$. Let us then
recode variable $A$ by setting
$A^{*} :=  5-A$. Then suppose
for $\alpha: \{A^{*} \ugualino 3 \}$ and $\beta: \{B > 0.5 \}$ that the
data strongly support the inequality $R_{11}-R_{01}-R_{10} > 0$.
Because $\alpha$ consists of a single point, and therefore
$A$ is $\alpha$-insensitive with respect to $Y$,
we may conclude that $B$ interferes with $A$
in producing the event $Y \ugualino 1$. The reverse inference, that
$A$ interferes with $B$, is possible if $B$ is $\beta$-insensitive
with respect to $Y$, but this assumption may be problematic since
$\beta$ does not consist of a single point.

\section{\textcolor[rgb]{0.00,0.00,0.63}{\bf Examples}}
\label{Examples}

We discuss the examples of Figures~\ref{ADAGs}{\em (a)---(b)}.

\noindent {\bf Example of Figure~\ref{ADAGs}{\em (a)}}
Let $Y$ be an indicator of disease, depending on a pair $(A,B)$ of
genetic variants in linkage equilibrium with each other; and let
covariate $Z$, representing genealogical information, say,
be sufficient for the effects
of $A$ and $B$ on $Y$. Then the graph of Figure~\ref{ADAGs}{\em (a)} might be an
acceptable representation of the problem.  Suppose further there is
consensus that $Y$ is functional with respect to $(A,B,U)$, for
example because the effects of the two variants on $Y$ are thought to
operate through a common molecular mechanism.  Then the core
conditions are satisfied if we take $C \equiv Z$, and so observational
$(A,B,Z,Y)$ data can be used to test for $A$--$B$ coaction to produce
the event $Y\ugualino 1$.

\noindent {\bf Example of Figure~\ref{ADAGs}{\em (b)}} In this
example, where $C$ is necessarily empty, node $U$ is {\em not}
independent of $A$, which
violates core condition 3.
  Consequently a set of observational
$(A,B,Y)$ data will typically not suffice for us to be able to test
productive coaction of $A$ and $B$ by using the proposed method.

\section{\textcolor[rgb]{0.00,0.00,0.63}{\bf Relations with previous
    work}}
\label{Relations with previous work}

In certain formal frameworks for ``statistical causality'', including
Pearl's structural equation formulation (\cite{pearlbook}, chapter 7)
and the potential response framework of \cite{Rubin2005}, it is
possible to construct a totally fictitious mathematical variable $V$
which makes (\ref{eq:f}) true by mathematical fiat.  Our approach
differs in that we conceive of the context variable $V$ as both real
and relevant --- and thus in principle observable; its relationships
with the remaining variables in the problem need be negotiated and
explicitly represented in the causal model.  This has practical
consequences for data analysis.  Consider, for example,
Figure~\ref{ADAGs}{\em (c)} and {\em (d)}.  These two examples differ
only in that, in the former, on the basis of contextual knowledge, we
judge the unobserved ``context variables'' $U$ that differentiate the
possible behaviours of $Y$ in response to $(A,B)$ to be {\em a
  priori\/} unrelated to $Z$, whereas in the latter example, these
unknown variables are judged to act as unobserved confounders of the
effect of $Z$ upon $Y$.  We have seen that this difference has
consequences on our decision to apply the method, and whether or not
we should condition on $Z$.

\newpar Also, our method replaces the generic assumption of ``the
effect of $A$ and $B$ on $Y$ is not confounded given $C$'' with a
formal set of independencies (the core conditions) that need to be
satisfied by the causal model.  We have seen in Section~\ref{The core
  conditions} that this formal method can capture important
differences between different applications.

\section{\textcolor[rgb]{0.00,0.00,0.63}{\bf Illustrative study:
    rs1333040 coacts with statins}}
\label{Illustrative study: rs1333040 coacts with statins}

Within the Italian genetic study of early-onset myocardial infarction
(\cite{Ardissino2010}), between 1996 and 2002, an incident sample of
2050 cases was selected on the basis of an hospitalization for
myocardial infarction (MI) between age 40 and age 45, over a set of
125 Coronary Care Units spread nationwide. After entering the study,
each sample subject produced a blood sample from which plasma was
separated and DNA extracted, and was then prospectively monitored for
an average of 12 years of follow-up. Let the outcome of the follow-up
be represented by a binary variable, $Y$, indicating whether a
re-infarction or cardiovascular death were observed ($Y\ugualino 1$),
or not observed ($Y\ugualino 0$) within a period of 120 months from
study entry.

\newpar The research group agrees on the assumptions represented in
the ADAG of Figure~\ref{Figura studio}.  According to the graph, each
case is characterized by the following variables. Variable $G$
is a function of the genotype at rs1333040, a single nucleotide
polymorphism (SNP) located in chromosomal region 9p21.3. We
define $G$ to take value $1$ in presence of two copies of
the major rs1333040 allele, and value $0$ otherwise.
Variable $Z$ is the severity of coronaropathy at study entry.
Variable $T$ is the calendar year at study entry.
Variable $U$ represents a set of unknown confounders.
Variable $S$ indicates whether the subject was assigned to statin
treatment right after study entry ($S\ugualino 1$) or never after
study entry ($S\ugualino 0$), and
$I$ indicates presence/absence of hypercholesterolemia at study
entry. Variable $T$ here acts as a surrogate for relevant factors that vary
with calendar time. These include therapy evolution, progress of
medical knowledge and impact of legislation. These factors are assumed
to influence both medical practice, specifically concerning use of
statins, and the clinical outcome $Y$. During the study period,
National Guidelines concerning use of statins had not yet come into
force, and the decision whether or not to administer statins to
patients of the kind we are studying was taken more or less
randomly by the recruiting Coronary Care Unit, though to
some extent dependent on whether or not the patient was found
to have hypercholesterolemia at study entry. This
is accounted for in the graph by the $I \rightarrow S$ arrow. The
graph also conservatively allows that susceptibility to
hypercholesterolemia may depend on the genotype at the SNP of
interest, although evidence in support of this has never been found.

\begin{figure}
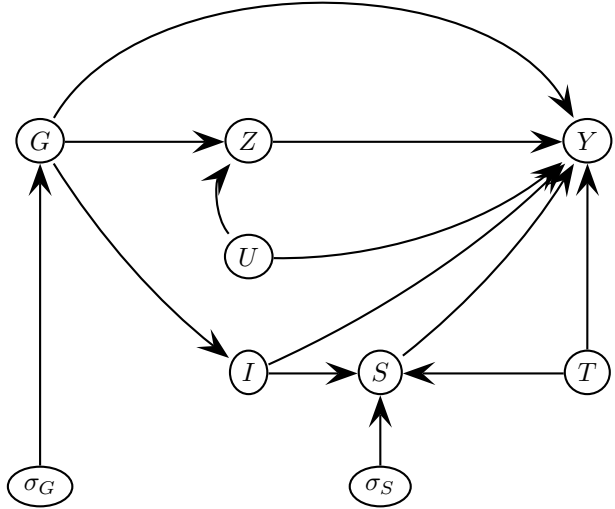

  \begin{center}
    \psframebox[fillstyle=none,fillcolor=pink,
    linestyle=none,framesep=.4]{%
      \begin{psmatrix}[ref=t,rowsep=0.9cm,colsep=1cm]
        \rule{0cm}{3cm}
        &&&&&&\\
        &\Show{$G$}&&\Show{$Z$}&&&\Show{$Y$}\\
        &&&\Show{$U$}&&&\\
        &&&\Show{$I$}&\Show{$S$}&&\Show{$T$}\\
        &\Show{$\sigma_G$}&&&\Show{$\sigma_S$}&& \psset{arrows=->,arrowsize=4pt
          6,fillcolor=white,fillstyle=solid}
        \ncline{5,2}{2,2}
        \ncline{2,2}{2,4}
        \ncline{4,4}{4,5}
        \ncline{4,7}{4,5}
        \ncline{4,7}{2,7}
        \ncline{4,6}{4,5}
        \ncline{2,4}{2,7}
        \ncline{5,5}{4,5}
        \psset{arrows=->,arrowsize=4pt
          6,fillstyle=none}
          \ncarc[arcangle=-10]{2,2}{4,4}
        \ncarc[arcangle=-10]{4,5}{2,7}
        \ncarc[arcangle=60]{2,2}{2,7}
        \ncarc[arcangle=-20]{3,4}{2,7}
        \ncarc[arcangle=-10]{4,4}{2,7}
        \ncarc[arcangle=40]{3,4}{2,4}
      \end{psmatrix}}
  \end{center}
  \caption{\small ADAG representation of our illustrative study of
  coaction between a gene tagged by single nucleotide polymorphism
  rs1333040 and statin treatment in producing myocardial
  infarction. See main text for a justification
  of the causal relationships depicted in this graph. \label{Figura studio}}

\end{figure}

\newpar Instead of performing separate analyses within strata of
$(T,I)$, we restrict analysis to the stratum of patients with
hypercholesterolemia $(I \ugualino 1)$, and
assume that, in this stratum, the effect of $T$ does not
interact with $G$ and $S$.  We then model the effect
of $(G,S,T)$ on $Y$ in the stratum of patients with $I=1$
via the following linear
risk Bernoulli model:
\begin{equation}
  \left\{
    \begin{array}{lll}
      Y&\sim& {\rm Bernoulli}(\pi),\\
      \pi &=& \alpha + \phi_{S=0} +\phi_{G=1} + \gamma_{\footnotesize (S=0) \times (G=1)} +
      \delta_{t}\; T,
    \end{array}
  \right.
  \label{modello regressione}
\end{equation}
\noindent where $\delta_{t}$ represents a linear effect of
calendar year, in years since 1970. If our data provide
evidence of a departure of
parameter $\gamma_{(S=0) \times (G=1)}$
from zero, we say that variables $S$ and $G$
interact {\em statistically} in the stratum
of hypercholesterolemic patients and, indeed,
the results
shown in Table~\ref{Table
  1} support this conclusion.
  The data seem to tell us that the
  beneficial effect of statins, in terms of
  reduction of risk of re-infarction in a hypercholesterolemic
  patient,
  is stronger in
  patients with $G = 0$.
  And that the highest risk is found in those
hypercholesterolemic patients with $G = 0$ who do
not receive statins.

\newpar Let's now shift from predictive
to mechanistic inference,
by examining whether the
interaction between variables $S$ and $G$
can be upgraded from ``statistical'' to ``mechanistic''.
In order to do this, we need to use
a different statistical test,
and to be explicit about the set
of (fairly strong) assumptions discussed in the previous section.
One of these is monotonicity of the
effects, which appears to be reasonable, since it does not
require prior knowledge of
the ``deleterious'' allele of the SNP.
Next, we need to assume that the core conditions hold.
Define $C = (T,I)$. With this choice, core
conditions 2 to 4 are satisfied, although core condition 1 --- that
$Y$ be a deterministic function of $(G,S,T,I,U)$ --- could be
problematic here unless we assume that, for any given value of $U$,
variable $G$ influences $Z$ and $Y$
through the same molecular mechanism whereby interference with the
effect of statin takes place. After accepting the core conditions, in
accordance with the theory of Section~\ref{Testing for coaction},
we partition the possible values of the rs1333040 genotype
into the set
$\alpha$ and its complement $\overline{\alpha}$.
We define $\alpha$ to indicate presence of
two copies of the most frequent rs1333040 allele,
corresponding to $G=1$,
so that $\overline{\alpha}$ will represent the remaining
two genotypic categories.
We define $\beta$ to
indicate that the patient is given statins,
corresponding to $S = 1$,
and we define $\overline{\beta}$
to indicate that the patient is {\em not} given
statins, corresponding to $S = 0$.
Since each of $\alpha$ and $\beta$ contains just one
value of the corresponding variable, $\alpha$-insensitivity
and $\beta$-insensitivity hold
in this case.

\newpar It is easy to show that, given $T=t$, the above
model implies $R_{11t}-R_{10t} -R_{01t} =  \gamma_{\footnotesize
(S=0) \times (G=1)} - \alpha - \delta_{t}t$.
This quantity, according to Table~\ref{Table 1}, is
significantly greater than zero for all relevant
values of $T$. Hence, in the light of our theory
and under the assumptions discussed above, we conclude that
$G$ and $S$ strongly coact to produce re-infarction.
The interpretation may be phrased in a number of ways.
One is to say that there exists some context
in which hypercholesterolemic patients with the
$G=1$ genotype are safe from re-infarction, whether or not
they take statins, whereas those with $G=0$ develop
or avoid re-infarction depending on assumption of statins.
A counterfactual rephrasing of this is to say that
some patients with $G=0$, who developed
re-infarction, would not have developed it,
had they received statins. All this can be interpreted to
suggest that statins and some gene tagged by rs1333040
influence susceptibility to reinfarction through a common
pathway, which motivates a future effort to
identify which gene is this, and what is its function.
Some researchers might have got to the same conclusions
from the results of the regression analysis, without
consideration of the theoretical framework proposed in
this paper. In our opinion, that would be careless.
Not only do such conclusions require
a statistical test of the kind proposed in this paper,
which differs from a standard interaction test, but also,
they require explicit consideration
of the (fairly strong) assumptions we have discussed in this paper.

\section{\textcolor[rgb]{0.00,0.00,0.63}{\bf Illustrative study:
    rs4620585 coacts with smoking}}
\label{Illustrative study 2}

Each of the cases in the study of the previous section was paired with
a control, matched by age and geographical region of origin.  After
eliminating individuals with missing data, 1666 controls remained
available for the analysis.  In this section we concentrate on
``smoking habit'', a binary indicator obtained by dichotomizing an
(imprecisely recorded) daily number of cigarettes.
We tested for possible coaction
between smoking habit and one or more SNPs of a list of ten
candidates from an independent study, an interesting signal
being found at SNP rs4620585 of human chromosome 1, never
previously been associated with a disease. The remaining
discussion restricts attention to SNP rs4620585.
Let $A$ signify rare rs4620585 homozygosity (RRH), and $B$ signify
``smoker''. Let $Y$ represent occurrence of early MI.
We assume that
the core conditions, and in particular
condition 4, hold in this problem, once we assume
(in accord with current knowledge) that the gene implicated
by rs4620585 has no influence on smoking habit or addiction to
nicotine.

\newpar On the basis of our data, we performed
a linear-odds regression of the case-control indicator
on SNP rs4620585 and smoking
habit. This analysis yielded the estimated coefficients of
Table~\ref{Table 3}. Because our ``early MI'' endpoint
is rare, we may safely assume
that the selection effect implicit in the
case-control study affects
the interaction parameter $\gamma$ and
the intercept  $\alpha$, in principle
estimable only through a prospective study, only
through multiplication by a common, unknown,
positive constant. Hence
we may take positivity
of $(\gamma-\alpha)$ to imply positivity of the linear combination
$R_{11}-R_{01}-R_{10}$ of prospective risks.
Since Table~\ref{Table 3} shows the quantity $R_{11}-R_{01}-R_{10}$ to be
significantly greater than zero (no multiple testing adjustment),
we conclude in favour of a potential
mechanistic interaction between SNP rs4620585 and smoking.
One interpretation of this result is to say that there
are circumstances in which some patients, by virtue of a
beneficial variant tagged by rs4620585, are safe from an early
MI regardless of their smoking, whereas patients without that
variant, who in the same circumstances developed an early MI,
would have avoided it, had they {\em not}
smoked.

\section{\textcolor[rgb]{0.00,0.00,0.63}{\bf Discussion}}
\label{Discussion}

Statistical interaction --- departure from some parametric model of
independent effects of explanatory variables --- is not necessarily
interpretable as reflecting an underlying mechanism, not least because
most statistical models are mathematical fictions
(\cite{Clayton2009}).  This is especially true when the modeller has
to negotiate {continuous} explanatory variables.  Our proposed
sufficient conditions for declaring coaction between continuous
variables do not invoke specific parametric forms of dependence, and
appear to provide a better basis for inference about mechanistic
interaction.  The proposed method does however, rely on the assumption
that the mechanism studied is, at some deep level, deterministic ---
which is by no means universally appropriate, as shown by \cite{dawid2004}.  This
assumption can, however, be defensible in some fields of application,
and our choice of an illustrative study in molecular medicine reflects
such concerns.


\newpar Finally, we would re-iterate that, unlike previous approaches
to the problem, the proposed method avoids artificial mathematical
constructs based on a potential response paradigm of statistical
causality.  While some of our tests are mathematically similar to
previously proposed tests based on ``principal stratum'' arguments,
our tests differ in that we insist the context variable $V$ be both
real and relevant. Although $V$ may be wholly or partly unobserved, it
is important in our method that it be, in principle at least,
observable, and that its relationships with the remaining variables in
the problem explicitly represented in the causal model.  With the aid
of study examples, we have shown that such an exercise is necessary to
differentiate situations in which the method is applicable from
situations in which it is not.

{\small \bibliographystyle{plain} \bibliography{REFERENCES}}

\vspace{1.5cm}

\begin{table}[ht]
  \begin{center}
    \begin{tabular}{rrrrr}
      & Estimate & Std. Error & z value & $p$-value \\
      \hline
      $\alpha$ (intercept) & -2.33 & 0.5 & -4.66 & 3$e^{-6}$ \\
      $\phi_{G = 1}$ (wild-type rs1333040 homozygous)  & -0.06 & 0.19 & -0.31 & 0.7 \\
      $\phi_{S = 0}$ (no statin treatment)& 1.41 & 0.24 & 5.85 & 4e-09 \\
      $\delta_T$ (linear effect of calendar year - 1970)& -0.02 & 0.017 & -1.52 & 0.12 \\
      $\gamma_{(S \ugualino 0) \times (G \ugualino 1)}$ & -1.0 & 0.33 & -3.0 & 0.002 \\
    \end{tabular}
  \end{center}
  \begin{flushleft}
\caption{\small Parameter estimates from a linear-odds regression
of the prospective binary endpoint in our illustrative
study (re-infarction within six years from the index
infarction) upon variables $S$ (the statin treatment indicator) and
$G$ (a function of the genotype at SNP rs1333040). Variable
$G$ is coded to take value 1 if the individual carries
two copies of the most frequent allele at single
nucleotide polymorphism rs1333040. This table reports
estimates for the parameters of the regression model,
as obtained from an analysis of 1200 subjects who were
hospitalized on the basis of a myocardial infarction
between 40 and 45 years of age, and were found at that
point to have hypercholesterolemia.
These estimates
suggest that, in patients with
hypercholesterolemia, statins decrease the risk of re-infarction regardless
of the rs1333040 genotype ($G$), although their effect is stronger in
patients with $G = 0$. At highest risk are those
hypercholesterolemic patients with $G = 0$ who do
not receive statins. Because the quantity $\gamma-\alpha$ is
significantly greater than zero, we deduce that $G$ and $S$
interfere with each other (and hence strongly coact) to
produce re-infarction.
\label{Table 1}}
\end{flushleft}
\label{tab:1}
\end{table}

\begin{table}[ht]
  \begin{center}
    \begin{tabular}{lllll}
      \hline
      && Standard&&\\
      & Estimate
      &\begin{minipage}{1.2cm}\begin{center}Error\end{center}\end{minipage}&
      z value & $p$-value \\
      \hline
      Intercept ($\alpha$)      &0.25&0.013&18.7& $<$ 2$e^{-16}$\\
      smoker              &1.46&0.044&33.1& $<$ 2$e^{-16}$ \\
      rare rs4620585 homozygous (RRH)&0.07&0.056&1.2&0.19\\
      smoker $\times$ RRH  ($\gamma$) &0.9&0.22&4.09&4$e^{-5}$
    \end{tabular}
  \end{center}
  \begin{flushleft}
\caption{\small Parameter estimates from a linear-odds regression of the
early MI indicator upon ``smoking habit'', obtained by dichotomizing
an original ``Daily number of cigarettes'' variable and genotype
at SNP rs4620585, based on the set of cases and controls of
our Illustrative study.
\label{Table 3}}
\end{flushleft}
\label{tab:3}
\end{table}

\end{document}